\newcommand{\bm}[1]{\mbox{\boldmath{$#1$}}} 
\newtheorem{theorem}{Theorem}
\title{Adaptive Markov Chain Monte Carlo for Auxiliary Variable Method and Its Application to Parallel Tempering}
\author{Takamitsu Araki \and
        Kazushi Ikeda }
\affiliation{
Graduate School of Information Science, Nara Institute of Science and
Technology}
\email{takamitsu-a@is.naist.jp, kazushi@is.naist.jp}
\begin{document}

\maketitle

\begin{abstract}
Auxiliary variable methods such as the Parallel Tempering and the cluster Monte Carlo methods 
generate samples that follow a target distribution by using proposal and auxiliary distributions.
In sampling from complex distributions, these algorithms  are highly more efficient than the standard Markov chain Monte Carlo methods. 
However, their performance strongly depends on their parameters and determining the parameters is critical.
In this paper, we proposed an algorithm for adapting the parameters during drawing samples 
and proved the convergence theorem of the adaptive algorithm. 
We applied our algorithm to the Parallel Tempering. That is, we developed adaptive Parallel Tempering that tunes the parameters on the fly. 
We confirmed the effectiveness of our algorithm through the validation of the adaptive Parallel Tempering, comparing samples from the target distribution by the adaptive Parallel Tempering and samples by conventional algorithms.

\end{abstract}

\begin{keywords}
Adaptive Markov Chain Monte Carlo, Auxiliary Variable Method, Parallel Tempering, Convergence
\end{keywords}

\section{Introduction}

Markov chain Monte Carlo (MCMC) methods have been an important algorithm in various scientific fields (Liu 2001, Robert and Casella 2004).
MCMC methods can generate samples that follow a target distribution
by using a simple proposal distribution.
However, 
in sampling from complex distribution such as multi-modal, 
the standard MCMC methods produce samples theoretically converge the target distribution but practically do not.
The produced samples can be trapped in a local mode for a extremely long period.

To cope with this localization problem, 
the parallel tempering (PT) a.k.a.\ exchange Monte Carlo method was proposed
(Geyer 1991; Hukushima and Nemoto 1996).
The PT algorithm introduces auxiliary distributions with a parameter called the temperature,
generates multiple MCMC samples from target and auxiliary distributions in parallel,
and exchanges the positions of two samples. 
An auxiliary distribution is tempered when the temperature is high
and one with a low temperature is similar to the target distribution.
This ``tempering'' implementation and the exchange process help samples escape from a local mode. 

Auxiliary distributions are also used in other several algorithms.
One example is the Gibbs variable selection in Bayesian variable selection,
where auxiliary distributions approximate the marginal posterior distribution of coefficient parameters for sampling from the joint posterior
(Dellaportas et al.\ 2002).
Another is the cluster Monte Carlo methods, 
efficiently produce samples by block-wise updates 
based on auxiliary distributions 
 (Swendsen and Wang 1987; Higdon 1998). 
These algorithms are referred to as auxiliary variable methods (AVMs) in this paper.

The performance of an AVM depends on both the proposal distribution and the auxiliary distributions.
Hence the parameters of the distributions have to be chosen so that the Markov chains of the AVM mix as faster as possible.
They have been tuned by rough methods or trial-and-error 
in pilot runs so far because their relationship to the mixing speed has not been clear. 

For standard MCMC methods such as Metropolis-Hastings algorithm (Hastings 1970),
Gilks et al.\ (1998) and Haario et al.\ (2001) proposed adaptive MCMC algorithms
that tuned the parameters of a proposal distribution by using past samples during runs.
Haario et al.\ (2001) also proved the convergence theorem of their algorithms, 
which was developed later (Andrieu and Moulines 2006; Roberts and Rosenthal 2007).

In this paper, 
we proposed adaptive AVMs by extending the above adaptive algorithms to general AVMs,
where the algorithms adapt the parameters of the proposal and the auxiliary distributions of AVMs during runs. 
We also proved the convergence theorem of our algorithm in a similar way to Roberts and Rosenthal (2007).

We showed the effectiveness of adaptive AVMs by applying it to the PT algorithm and validating it numerically. That is, 
we developed an adaptive PT algorithm that tune the proposal parameters, which contain the number of temperatures, and temperatures while it runs, and  
  showed the effectiveness of the algorithm via numerical experiments. 
We also proved the convergence of the adaptive PT algorithm by using the theorem of the general adaptive AVMs. 

The rest of the paper is organized as follows.	
the conventional PT algorithm is briefly shown in Section 2
and an adaptive PT algorithm is proposed in Section 3.
Experimental properties of our algorithm are examined in Section 4.
In section 5, the proposed algorithm is extended to general AVMs
and the convergence theorems of our algorithms are proved in Section 6. 
Finally, we give a conclusion in Section 7. 

\section{Parallel Tempering Algorithm}

The PT algorithm is a typical algorithm that uses auxiliary distributions, $\pi_{t_l}(dx_l)$, $l=2,...,L$, where $1 = t_1 > t_2 > \cdots > t_L > 0$. 
The density of the $l$th auxiliary distribution is parametrized by the inverse temperature $t_l$ as $\pi_{t_l}(x) \propto \pi(x)^{t_l}$ or $\pi_{t_l}(x) \propto \pi(x)^{t_l} p(x)^{1-t_l}$, 
where $\pi(x)$ is the density of the target distribution and $p(x)$ is the density of a simple distribution that mix fast by using a conventional MCMC method.
In other words, the inverse temperature $t_l$ tempers the multi-modality of the target distribution $\pi(dx)$
so that the auxiliary densities, $\pi_{t_l}(x_l)$, gradually connect the target density $\pi(x)$ to a simple density $p(x)$ or the uniform distribution.

The PT algorithm executes either of the parallel step and the exchange step at time $n$, with probability $\alpha_r$ and $1-\alpha_r$, respectively.
The parallel step generates the $L$ samples, $x_l^{(n)}$, $l=1, \ldots, L$, according to $\pi_{t_l}(d x_l)$ for each by using a standard MCMC method. 
Note that we employed the Metropolis algorithm with the proposal variance $\gamma_l$ in this paper. 
The exchange step randomly chooses a sample $x_l^{(n)}$ from 
the $L-1$ samples, $x_l^{(n)}$, $l=1, \ldots, L-1$, and exchange $x_l^{(n)}$ for $x_{l+1}^{(n)}$ 
with probability 
\begin{align}
  \label{eq:condition}
  &\min \left(  1,\dfrac {\pi_{t_l} (x_{l+1}^{(n)})\pi_{t_{l+1}}(x_{l}^{(n)})}{\pi_{t_l} (x_l^{(n)})\pi_{t_{l+1}} (x_{l+1}^{(n)})} \right).
\end{align}

The performance of the PT algorithm strongly depends on the inverse temperatures, 
more specifically, their intervals and their number. 
The interval of two adjacent inverse temperatures determines both the similarity of the two distributions and the acceptance probability of an exchange
as seen in Eq.~(\ref{eq:condition}).
The acceptance ratio for the exchanges, which is referred to as exchange ratio in this paper, should not take extreme value. 
For example, Liu (2001) said a preferable value is a half at any interval.
To avoid extreme values and lead to homogeneous exchange ratios, 
Hukushima (1999) updated temperatures using a recursive formula through preliminary runs
and Goswami and Liu (2007) tuned the intervals by iteratively estimating the expected exchange probability through preliminary runs. 

Jasra (2007) treated the intervals as a sequence 
and experimentally compared three inverse-temperature sequences, equal space, logarithmic decay and power decay.  
The results showed the last was best.
Nagata and Watanabe (2008) proved in the low temperature limit when the sequence of inverse temperatures is a geometric progression 
the exchange ratios are homogeneous. 
However, the above methods only discussed the intervals and
did not take into account the proposal distributions,
on which
the mixing of samples and 
 the estimation of exchange ratio also depend. 
In our setting, for example, the Metropolis algorithm has a parameter to be determined,
that is, the proposal variance $\gamma_l$. 
It is necessary to re-set the proposal variance when the inverse temperatures are changed a lot, because the appropriate proposal variances obviously depend on the shape of auxiliary distributions. 

The more auxiliary distributions the PT algorithm has, 
the faster the samples are mixed because flatter auxiliary distributions are available
but the more computational complexity is required.
To solve the trade-off and determine an appropriate number of distributions,
Goswami and Liu (2007) proposed to select the maximum temperature using statistical tests. 
The tests should be done in an off-line manner, that is, they need preliminary experiments in advance.

\section{Adaptive PT Algorithm}

We propose an adaptive PT algorithm that adapts the inverse temperatures, the parameters of proposal distribution,
and the minimum inverse temperature while the algorithm is running.
The three adaptation algorithms are described below.

The exchange ratio should take a moderate value. 
To converge the exchange ratio for $x_{l-1}$ and $x_l$ 
to a specific value, $\alpha\in(0,1)$, typically a half, 
the log inverse temperature, $\zeta_l = \log(t_l)$, is updated as
\begin{align}
  \label{eq:itemplrn}
  \zeta_{l}^{(n+1)} &\leftarrow \zeta_{l}^{(n)} - a^l_n(ER_{l-1,l}^{(n)} - \alpha),
\end{align}
where $ER_{l-1,l}^{(n)}$ is a variable that takes one if the exchange occurs between the samples, $x_{l-1}^{(n)}$ and $x_{l}^{(n)}$, at time $n$, 
and zero otherwise.
The learning coefficient, $a^l_n$, is a decreasing random variable with $n$ that satisfies $\lim_{n \rightarrow \infty} a^l_n = 0 \ almost \ sure$. 

The proposal distribution of the Metropolis algorithm for a target and auxiliary distribution should have an appropriate variance, 
which is an average of the variances of all modes. 
To control the proposal distribution of the Metropolis algorithm for the 
distribution $\pi_{t_l}(d x_l)$ on $\mathbb{R}^p$, 
the the variance $\gamma_l = (\gamma_{l1},...,\gamma_{lp}) \in \mathbb{R}^p$  
and the auxiliary mean parameter $\mu_{l} = (\mu_{l1},...,\mu_{lp}) \in \mathbb{R}^p$ 
are updated as
\begin{equation}
\begin{split}
\mu_{lj}^{(n+1)} & \leftarrow \mu_{lj}^{(n)} + b_n (x_{lj}^{(n+1)} - \mu_{lj}^{(n)}), \\
\gamma_{lj}^{(n+1)} & \leftarrow \gamma_{lj}^{(n)} + 
 b_n \left( (x_{lj}^{(n+1)} - \mu_{lj}^{(n+1)})^2 - \gamma_{lj}^{(n)} \right), \label{eq:learnProp}
\end{split}
\end{equation}
where $x_{lj}^{(n+1)}$ is the $j$th element of $x_{l}^{(n+1)}\in \mathbb{R}^p$.
When $x_l^{(n)}$ is updated to $x_l^{(n+1)}$ by exchanging to $x_{l-1}^{(n)}$ or $x_{l+1}^{(n)}$,	
 the $\mu_l^{(n)}$ jumps to the exchanged value, that is, 
$\mu_l^{(n+1)} \leftarrow x_{l}^{(n+1)}$. 
The learning coefficient, $b_n$, is a decreasing function of $n$ that satisfies $\lim_{n\to\infty} b_n =0$.

The auxiliary distribution with the minimum inverse temperature should be so flat that Metropolis samples can frequently move from one mode to another 
while the total number of auxiliary distributions should be as small as possible.
To determine an appropriate value for the minimum inverse temperature,
the auxiliary distributions $\pi_{t_l}(dx_l)$ with $l>l^*$ are removed 
where $l^*$ is the smallest number that satisfies 
\begin{align}
  \label{eq:criterion}
  \prod_{j=1}^p \gamma_{lj}^{(n)} \geq \prod_{j=1}^p V^{(n)}(x_{lj}) ,
\end{align}
where $V^{(n)}(x_{lj})$ is the sample variance of $x_{lj}$ at time $n$. 
This check is done at time $n = m, 2m, \ldots$, where $m$ is a large number (e.g.~$10^4$).
To improve the reliability, when inequality (\ref{eq:criterion}) holds a few times $d$ (e.g.~$3$) in succession the auxiliary distribution is determined to be enough flat. 

Inequality (\ref{eq:criterion}) shows the relationship between the sample variance and the proposal variance.
Due to Eq.~(\ref{eq:learnProp}), the latter converges to the variance of local region
and hence it is smaller than the sample variance if Metropolis samples are localized in a mode. 
Otherwise, the auxiliary distribution is flat enough.

A pseudo code of the adaptive PT algorithm is given in the following. 
\begin{algorithm}                      
\caption{Adaptive PT algorithm}         
\label{adaptivePT}                       
\begin{algorithmic}
\STATE {\bf Initialize} $x_{l}^{(0)}$, $\zeta_l^{(0)}$, $\gamma_l^{(0)}$, $\mu_l^{(0)}$  and $c_l = 0$ ,$l=1,...,L$. 
($\zeta_1^{(0)}$=0, constant). 
\FOR{$n=0$ to $N-1$}
\STATE $u \sim U[0,1]$ (where $U[0,1]$ is a uniform distribution of the interval (0,1)).
\IF{$u \leq \alpha_r$} 
\FOR{$l=1$ to $L$} 
\STATE ({\it parallel sampling step}) \\ 
\STATE Generate $x_l^{(n+1)}$ via Metropolis Algorithm for $\pi_{t_l^{(n)}}(dx_l)$, which has the proposal variances $\gamma_l^{(n)}$.
\STATE ({\it proposal parameter learning step}) \\
	Update ($\gamma_l^{(n)}$, $\mu_l^{(n)}$) to ($\gamma_l^{(n+1)}$, $\mu_l^{(n+1)}$) by the Eq. (\ref{eq:learnProp}).
\ENDFOR    
\ELSE 
\STATE ({\it exchange step}) \\
     Randomly Choose a neighboring pair, $x_l^{(n)}$ and $x_{l+1}^{(n)}$, and exchange them 
 with the probability Eq.~(\ref{eq:condition}). 
\STATE ({\it inverse temperature learning step}) \\
\STATE Update $\zeta^{(n)}_{l+1}$ to $\zeta^{(n+1)}_{l+1}$ by Eq.~(\ref{eq:itemplrn}). 
\IF{the exchange is accepted, }
\STATE $\mu_k^{(n+1)}$ $\leftarrow$ $x_k^{(n+1)}$, for $k=l,l+1$.
\ENDIF 
\ENDIF 
\STATE ( {\it minimum inverse temperature decision step}) \\
\IF{$(n \bmod m) == 0 $}
\FOR{$l=1$ to $L$}
\STATE If Eq.~(\ref{eq:criterion}) hold, then $c_l \leftarrow c_l + 1$.
\ENDFOR
\STATE $L \leftarrow \min \{ l | c_l \geq d ,l=1,\ldots,L \}$, if $\{ \} \neq \emptyset$. 
\ENDIF
\ENDFOR
\end{algorithmic}
\end{algorithm}

The adaptive PT algorithm converges. 
The proof will be given later as a special case of 
adaptive MCMC algorithms for general auxiliary variable methods. 

\section{Experiments}\label{sec:exmpls}

To confirm the effectiveness of our algorithm,
the following two computer experiments were carried out:
\begin{enumerate}
\item A mixture of four normal distributions.
\item The posterior of a mixture of six normal distributions.
\end{enumerate}

In each of the experiments, 
the burn-in period was a half of the total number of iterations
and sample sets, which are used in an estimation and a scatter plot, are chosen from every 50 samples in post burn-in. 
The proposal distribution of the Metropolis algorithm was an independent normal distribution. 
Other parameters were $\alpha = 0.5$, $\alpha_r = 0.5$, $a^l_n =1/(1 + n/(20 + 10 l)) * \log(\exp(-\zeta_l^{(n)}) + 1)$,
$b_n = 1/(5+0.1n)$, $m=10^4$ and $d=3$.    
$L=25$ and the intervals of inverse temperatures are equal, that is $t^{(0)} = (1,24/25,23/25,\ldots,1/25)$, 
at the initial condition. 
Note that these values are invariant for the each above distribution, i.e., a tuning of these values was not necessary in these experiments.

\subsection{A mixture of four normal distributions}

To see and visualize the properties of our adaptive PT algorithm,
we chose a mixture of four normal distributions in two dimensional space as the target distribution
(Fig.~\ref{fig:coSca2dim}(a)):
\begin{align}
 & g(x) = \notag \\ & \sum_{i=1}^4 \dfrac{1}{4} \dfrac {1}{2\pi^2 {\mathrm det} (\Sigma_i)^{1/2}} 
  \exp \left( - \frac 1 2 (x - \mu_i)^T \Sigma_i^{-1} (x - \mu_i)  \right), \notag
\end{align}
where the parameters of the normal distributions are
\begin{align}
& \mu_1 = (0, 44), & \mu_2 = (44, 0),  \nonumber \\ & \mu_3 = (0, -44), & \mu_4 = (0, -44), \nonumber \\ 
& \Sigma_1  = {\rm diag}(1, 7^2), & \Sigma_2 = {\rm diag}(7^2, 1), \nonumber \\
& \Sigma_3 = {\rm diag}(1, 7^2),  & \Sigma_4 = {\rm diag}(7^2, 1). \nonumber
\end{align}
These normal distributions have quite different variances ($1$ and $7^2$), where the proposal variance learning is difficult.

\begin{figure*}[htbp]
\begin{center}
 \includegraphics[width=0.75\hsize]{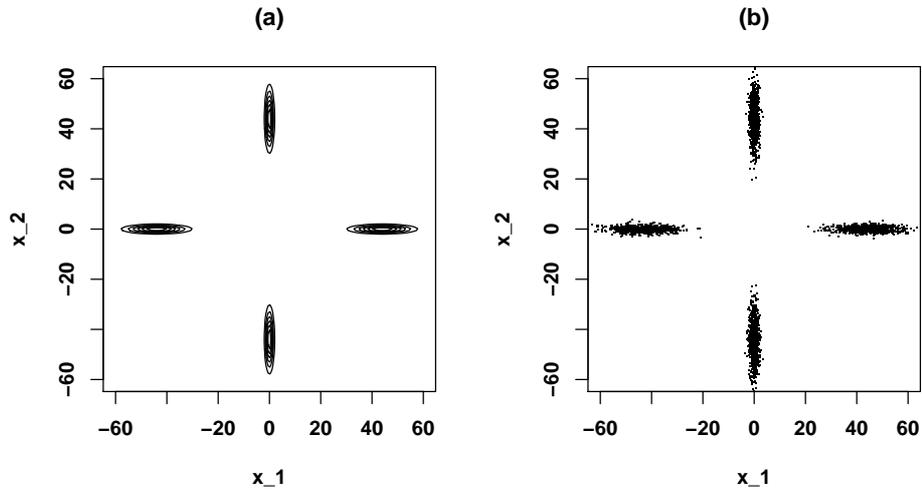}
\caption{A mixture of four normal distributions.
  a) The target distribution.
  b) Samples by the Adaptive PT algorithm.}
\label{fig:coSca2dim}
\end{center}
\end{figure*}

The adaptive PT algorithm ran for $3 \times 10^5$ iterations,
where the auxiliary distributions are $\pi_{t_l}(x) \propto g(x)^{t_l}$ 
and the initial proposal variances are $\gamma_{lj}^{(0)} = 3 \times 10^2$.

As a result, 
our algorithm mixed well and obtained samples from all possible modes (Fig.~\ref{fig:coSca2dim}(b)).
In fact, the number of inverse temperatures was reduced to five after $3 \times 10^4$ iterations 
but the auxiliary distribution $\pi_{\hat t_5}(dx)$ is flat enough (Fig.~\ref{fig:MHminitpr}), where $\hat t_5$ is the $t_5$ obtained by the adaptive PT algorithm. 

The larger the variances of the auxiliary distribution became, the larger the proposal variances became. 
In fact, the sums of proposal variances are $(\hat{\gamma}_{1,1} + \hat{\gamma}_{1,2}, \ldots, \hat{\gamma}_{5,1} + \hat{\gamma}_{5,2}) = (32.26, 41.86, 245.8, 1124, 8704)$.
\begin{figure}[htbp] 
\begin{center}
 \includegraphics[width=0.75\hsize]{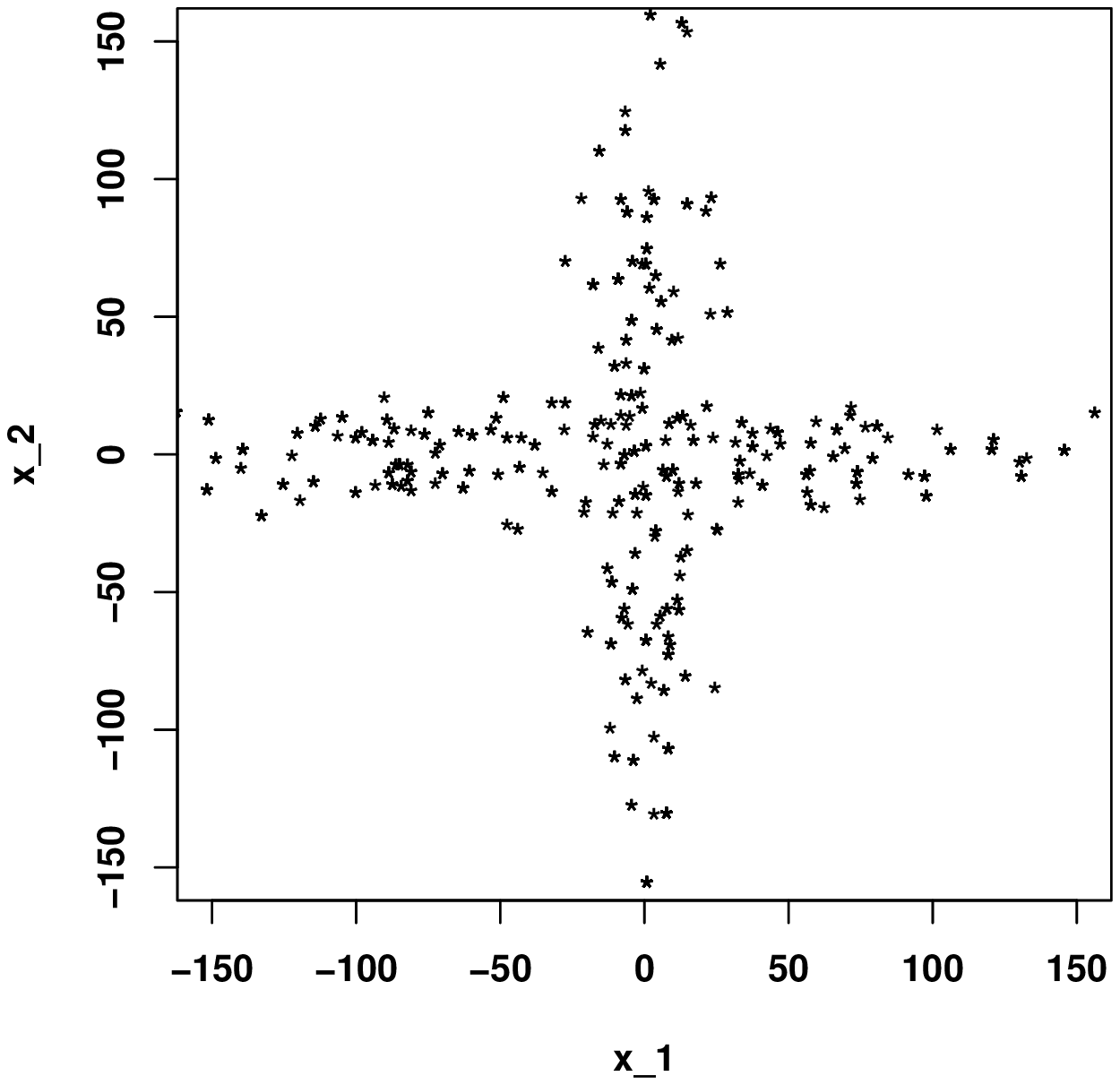}
\end{center}
\caption{Samples by Metropolis algorithm with $\hat {\gamma}_5$ from the auxiliary distribution $\pi_{\hat t_5}(dx)$. They cover all the modes in Fig.~\ref{fig:coSca2dim}.}
\label{fig:MHminitpr}
\end{figure}

The estimated exchange ratios converged to $(0.501 ,0.507 ,0.499, 0.498)$, all of which are almost $\alpha=0.5$. 
Then, the inverse temperatures were $(\hat t_2, \ldots ,\hat t_5) = (0.328,  0.108, 0.0307, 0.00937)$. 

$t_2^{(n)}$ and $\gamma_2^{(n)}$ converge quickly from even the extreme starting points (Fig.~\ref{fig:prIvTPlt}). 
The others also converge as fast as them. 

\begin{figure*}[htbp]
\begin{center}
\includegraphics[width=0.75\hsize]{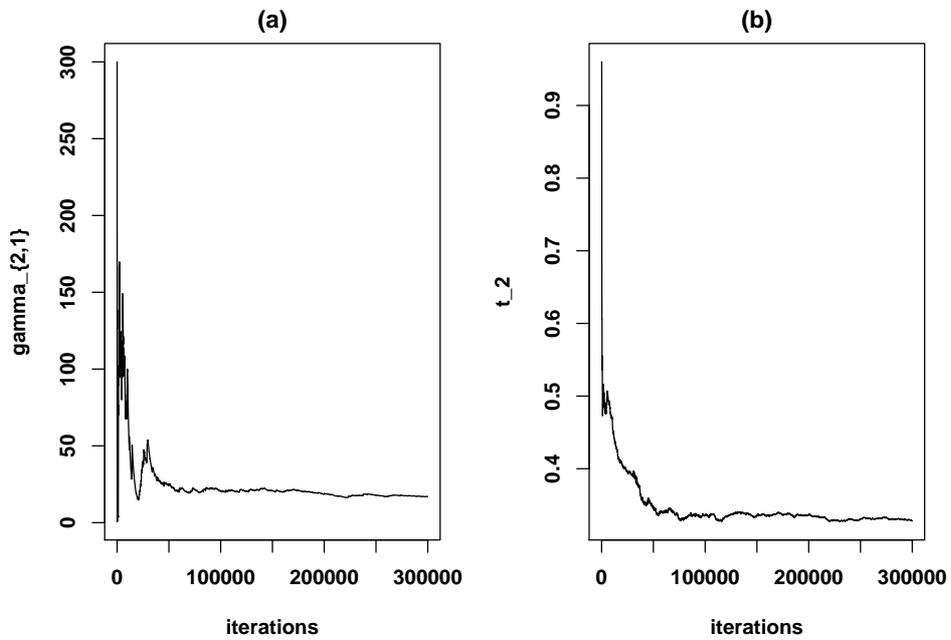}
\caption{trace plot (a):proposal variance $\gamma_{2,1}^{(n)}$, (b):inverse temperature $t_2^{(n)}$.} 
\label{fig:prIvTPlt}
\end{center}
\end{figure*}

\subsection{The posterior of a mixture of six normal distributions}

In the second experiment, 
we estimated the average of component specific means of given the data using Bayesian estimation as is seen in Jasra et al.~(2007). 
The statistical model was a mixture of normal distributions, 
that is, 
\begin{align}
 & f(y|\mu,w,\sigma^2 ) \nonumber \\ &= \sum_{m=1}^M \frac {w_m} {\sqrt{2 \pi} \sigma_m} 
 \exp \left( - \frac 1 {2 \sigma_m^2} \left(y - \mu_m\right)^2 \right), \label{eq:mixDens}  
\end{align}
where $w_M = 1- \sum_{m=1}^{M-1} w_m$.
The priors are a normal-inverse Gamma-Dirichlet prior as follows. 
\begin{align}
& \quad \mu_m \sim N(\xi ,\kappa^2 ), \quad m=1, \ldots, M, \nonumber \\
& \quad \sigma_m^2 \sim IG(\alpha_g,\beta_g), \quad m=1, \ldots, M, \nonumber \\
& \quad w_m \sim {\cal D}(\varrho), \quad m=1, \ldots, M-1, \nonumber
\end{align}
where ${\cal D}(\varrho)$ is the symmetric Dirichlet distribution with parameter $\varrho$.
In the following, the hyper-parameters were $\alpha_g = 12$, $\beta_g = 10$ and $\varrho = 1$.
The parameters $\xi$ and $\kappa^2$ were determined by the median and four times the variance of the given data.

The data of size $150$, $y_{1:150}$, were independently identically distributed according to  
a mixture model of the form (\ref{eq:mixDens}) with parameters, 
$M=6$, $w_1=\cdots=w_6 = 1/6$, $(\mu_1,\ldots,\mu_6) =$ $(-8,-3,1,4,8,13)$, 
$\sigma_1^2 =\sigma_6^2 = 1.5^2$ and $\sigma_2^2=\cdots=\sigma_5^2 = 0.5^2$.
In this case, the posterior $\pi(\mu,w,\sigma^2|y_{1:150})$ was a 17 dimensional distribution
and had $6!=720$ symmetric modes due to the invariance against permutation of the labels of the parameters. 

The auxiliary distributions were set to 
\begin{align}
  \pi_{t_l}(\mu,w,\sigma^2|y_{1:150}) \propto \left (\prod_{i=1}^{150} f(y_i|\mu,w,\sigma^2) \right )^{t_l} p(\mu,w,\sigma^2),  \notag
\end{align}
where $p(\mu, w, \sigma^2)$ was the prior.

Our algorithm was compared to the conventional PT algorithm with the fixed parameters.
The parameter values of the conventional algorithm were shifted from the values obtained by the adaptive PT algorithm as follows.
\begin{enumerate}
\item[(a)] $\zeta_l \leftarrow \hat {\zeta}_l \cdot \varphi_{\zeta}$, for $l=1,\ldots,L$, $(0.5 \leq \varphi_{\zeta} \leq 3)$. \\ $L \leftarrow \hat L$, $\gamma_l  \leftarrow \hat {\gamma}_l$.
\item[(b)] $\gamma_l \leftarrow \hat {\gamma}_l \cdot \varphi_{\gamma}^2$, for $l=1,\ldots,L$, $(0.1 \leq \varphi_{\gamma} \leq 3)$. \\ $L \leftarrow \hat L$, $\zeta_l  \leftarrow \hat {\zeta}_l$.
\item[(c)] $L \leftarrow \hat L + \varphi_{L}$, $(-5 \leq \varphi_{L} \leq 5)$. \\ $\gamma_l \leftarrow \hat \gamma_l$, $\zeta_l  \leftarrow \hat {\zeta}_l$.
\\ (If $\varphi_L > 0$, $\zeta_l$ and $\gamma_l$ were learned, for $l=\hat L + 1,..., \hat L + \varphi_L$, to maintain the fairness.)
\end{enumerate}

We ran the adaptive PT algorithm and the conventional PT algorithms for $10^6$ iterations respectively. 
The initial condition were $w_{l,m}^{(0)}=1/6$, 	
$\sigma_{l,m}^{2(0)} \sim IG(\alpha_g,\beta_g)$, $\mu_{l,m}^{(0)} \sim U[\min(y_{1:150}),\max(y_{1:150})]$ for each run. 
The initial values of $\gamma_{1}^{(0)}, \ldots, \gamma_{L}^{(0)}$ were the sorted $L$ random numbers from $U[0.0001,800]$.
The variables of posterior were divided into four blocks, the numbers of which are (5,4,4,4). 
Each Metropolis algorithm updated for the every block.

We evaluated the estimators of $\mu_m$, $m=1,\ldots,6$. 
The accuracy of the estimation was evaluated by the root mean square error (RMSE).
To evaluate the total error of the six estimators, RMSE takes the root average of the errors of the six estimators, that is, 
\begin{align}
  \mathrm{RMSE}(i) &= \left( \frac{1}{6} \sum_{m=1}^{6} (\bar \mu_m(i) - 2.5)^2 \right)^{1/2}, \notag
\end{align}
where $\bar \mu_m(i)$ is the estimator, which is the average of samples $\mu_m^{(n)}$, in the $i$th trial, and 2.5 is the true value. 

As a result, 
our algorithm can obtain appropriate parameters and achieve very low RMSEs.
Fig.~\ref{fig:wholeRMSE}(a) and (b) show 
the RMSEs of the adaptive PT algorithm and the conventional PT algorithms in 50 runs
were less than those of the conventional PT algorithms with shifted parameters.
On the other hand, even if the number of temperatures increases the RMSEs don't increase (Fig.~\ref{fig:wholeRMSE}(c)), but 
the computational costs of the algorithms increase.
In fact, the shifted inverse temperatures could not control the exchange ratios well
(Fig.~\ref{fig:avexrate}).
\begin{figure*}[htbp]

 \begin{tabular}{cc} 
    \begin{minipage}[b]{0.5\linewidth} 
	    \subfigure[]{%
	        \includegraphics[clip, width=\linewidth]{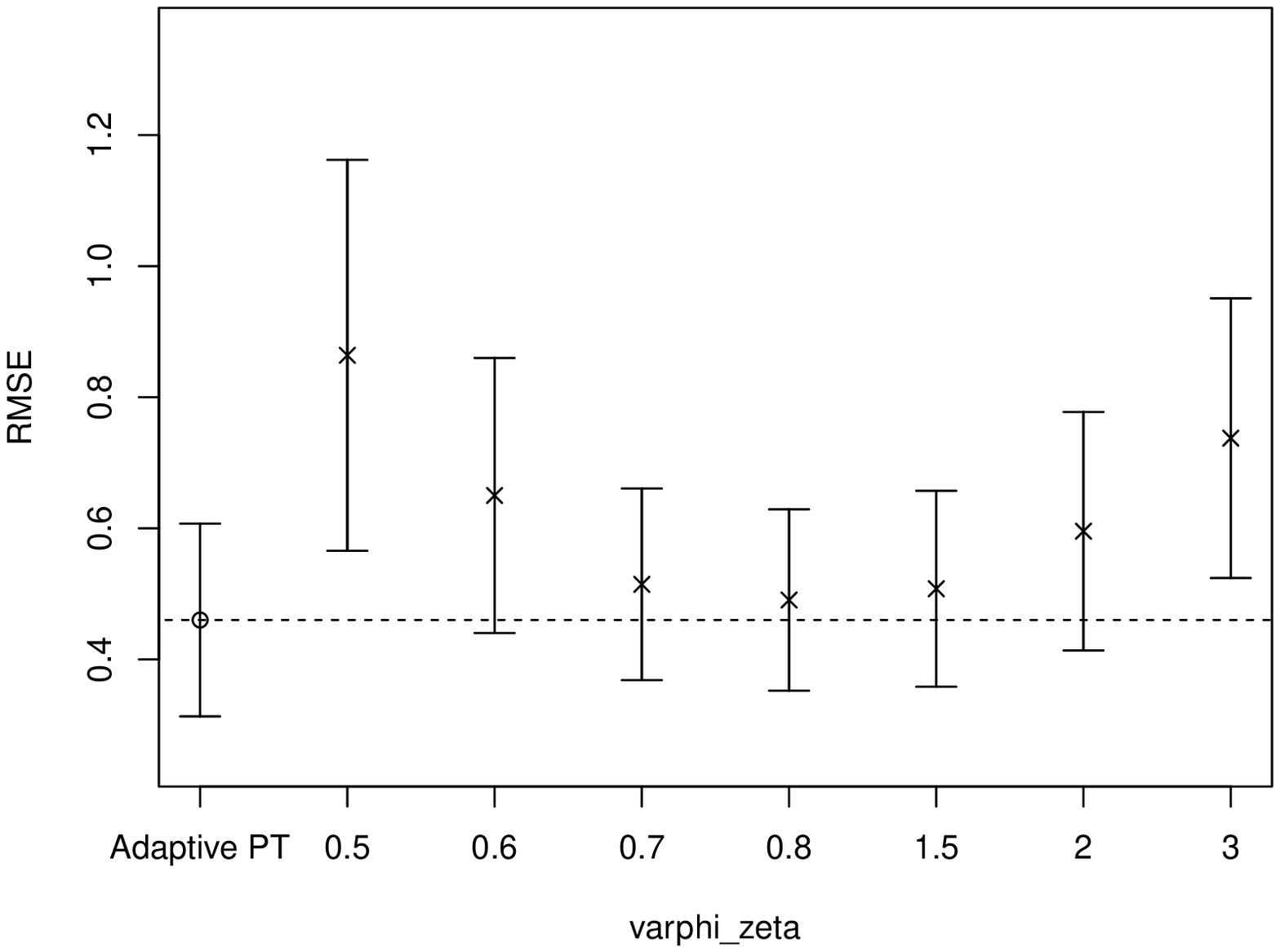}%
	     \label{fig:RmseTmpr}}
  \end{minipage}

    \begin{minipage}[b]{0.5\linewidth} 
	    \subfigure[]{%
	        \includegraphics[clip, width=\linewidth]{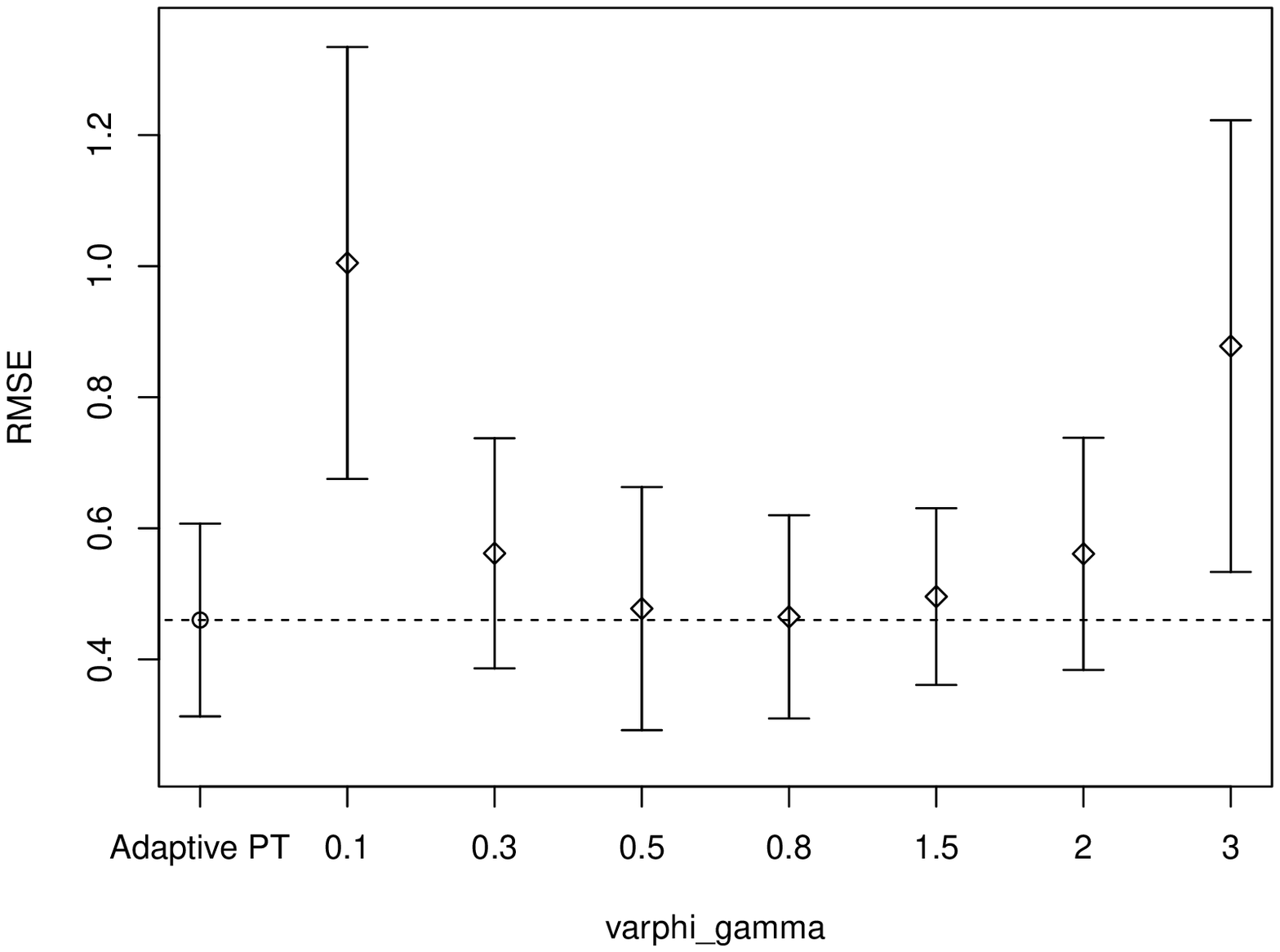}%
	     \label{fig:RmsePrp}}
  \end{minipage}
  
\end{tabular}
    \begin{minipage}[b]{\linewidth} 
    \begin{center}
	    \subfigure[]{%
	        \includegraphics[clip, width=0.5\linewidth]{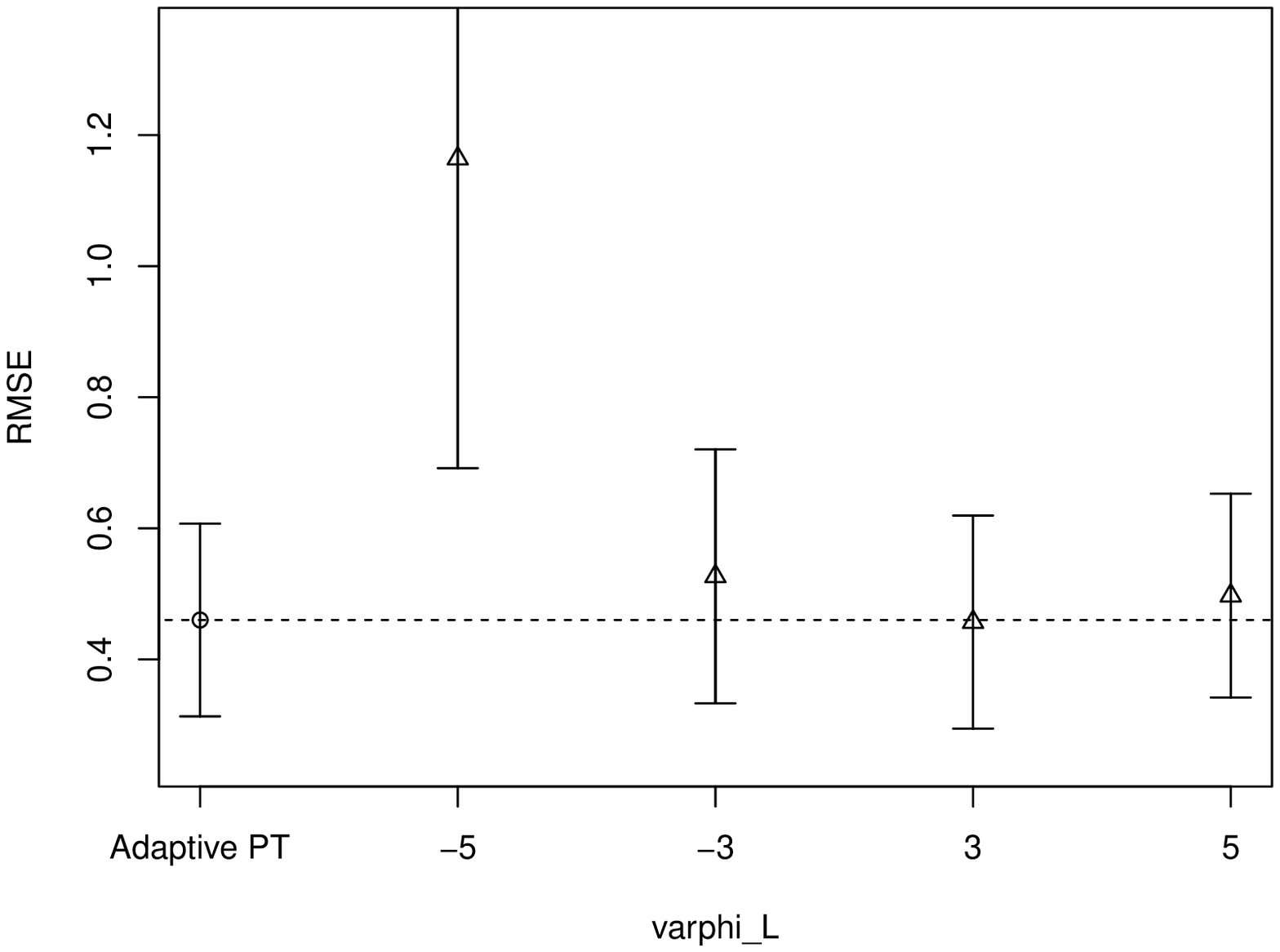}%
	     \label{fig:RmseL}}
	\end{center}
  \end{minipage}
\caption{
RMSEs in 50 runs. 
Each plot displays an average and a standard error of RMSEs for each algorithm in a mark and a radius of the error bar, respectively. 
The adaptive PT algorithm : ($\circ$). The conventional PT algorithms with the shifted parameters are separately plotted in these figures.
The inverse temperatures : $\times$ in (a), The proposal variances : $\diamond$ in (b), The number of inverse temperatures : $\triangle$ in (c). 
}
\label{fig:wholeRMSE}

\end{figure*}

\begin{figure}[htbp]
\begin{center}
 \includegraphics[width=\hsize]{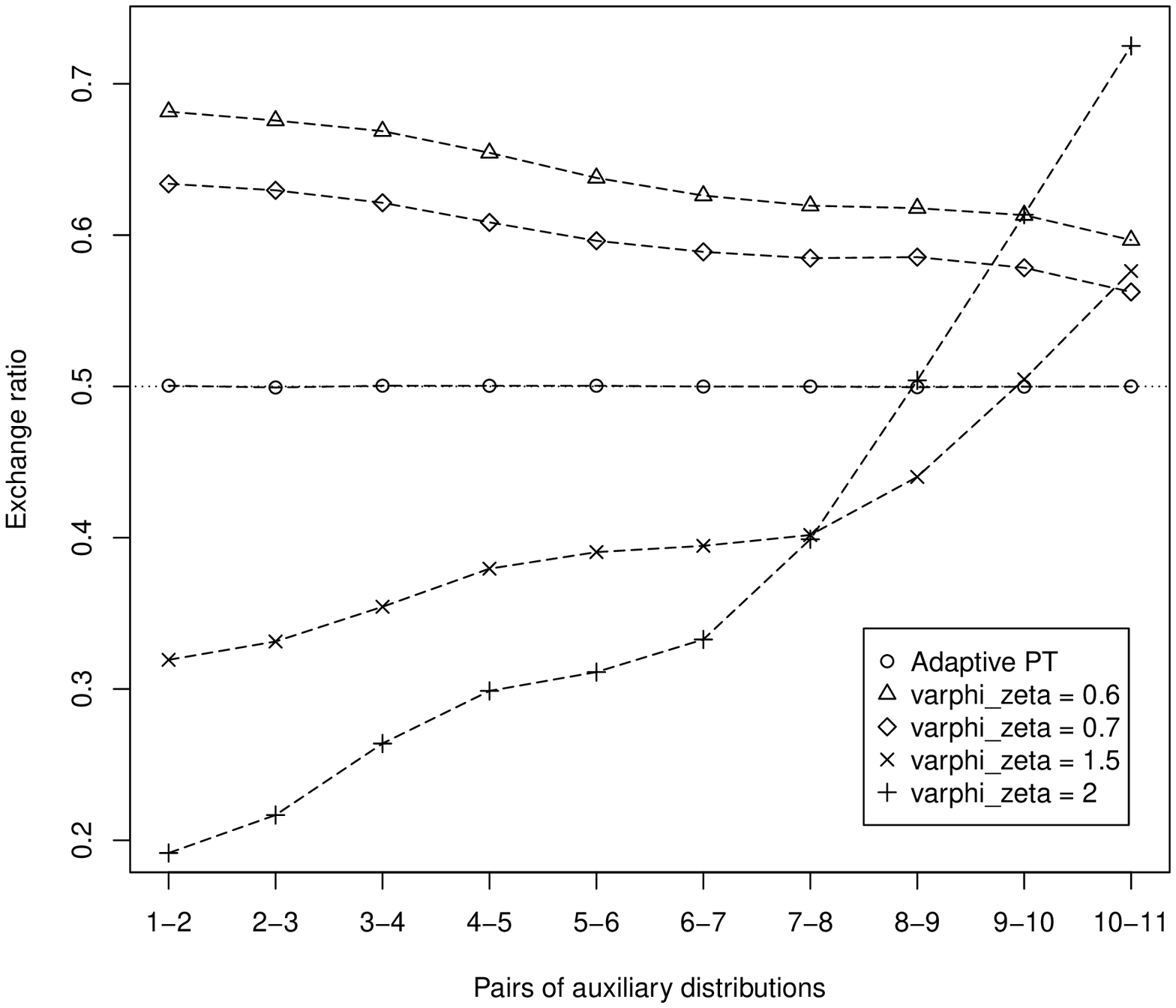}
\caption{The averages of the estimated exchange ratios in 50 runs. 
The ratios of the adaptive PT algorithm were almost the predefined value
while the conventional PT algorithm with shifted inverse temperatures could not be tuned constant and a moderate value (0.5), a dotted line.}
\label{fig:avexrate}
\end{center}
\end{figure}

\section{Generalization to Auxiliary Variable Methods}\label{sec:AVMoutline}

The idea of the adaptive PT algorithm is applicable to general AVMs.
AVMs are mathematically formulated as below.

Let $\pi(dx)$ be a distribution on a state space ${\cal X}$ with $\sigma$-algebra $F_{\cal X}$
and $\pi_{\lambda}(dy|x)$ be a conditional distribution on a state space ${\cal Y}$ with $\sigma$-algebra $F_{\cal Y}$ given $F_{\cal X}$,
where $\lambda \in \Lambda$ is a parameter vector.
Then, the marginal distribution on ${\cal X}$ of the joint distribution $\pi_{\lambda}(dx,dy) = \pi_{\lambda}(dy|x)\pi(dx)$ is $\pi(dx)$
irrespective of $\pi_{\lambda}(dy|x)$.

In case of MCMC methods with auxiliary variables, 
$\pi(dx)$ corresponds to the target distribution and $\pi_{\lambda}(dy|x)$ to the auxiliary distributions.
We term an MCMC method that draw samples $(x',y')$ from $\pi_{\lambda}(dx,dy)$ to obtain $x'$ an auxiliary variable method. 
In the PT algorithms, for example, the auxiliary distributions are $\pi_{\lambda}(dy|x)=\prod_{l=2}^L \pi_{t_l}(d x_l)$, $\lambda=(t_2, \ldots, t_L)$ and the auxiliary variables 
 $y = (x_2,...,x_L)$.

In order to introduce adaptation, we need to consider time-varying parameters. 
Let $\{ P_{\theta}((x,y),(dx,dy)) \}_{\theta \in \Theta}$ be a family of Markov transition kernels 
 on ${\cal X} \times {\cal Y}$
with stationary distribution $\pi_{ \lambda}(dx,dy)$, that is,
\begin{align}
  \left(\pi_{ \lambda} P_{ \theta}\right)(dx, dy)
  &= \iint_{x',y'} \pi_{ \lambda}(dx',dy') P_{ \theta}\left( (x',y'), (dx,dy)\right) \nonumber \\ 
  &= \pi_{ \lambda}(dx,dy), \nonumber
\end{align}
where $\lambda \subseteq \theta$. 
Then, the adaptive MCMC for AVM updates the parameters $\theta$
during generating chains $(x^{(n)},y^{(n)})$ by $P_{\theta}$  
as the following pseudo code. 
\begin{algorithm}                      
\caption{Adaptive MCMC for AVM}         
\label{generalAVM}                       
\begin{algorithmic}
\STATE {\bf Initialize} $(x^{(0)},y^{(0)}),\theta^{(0)}$.
\FOR{$n=0$ to $N-1$}
\STATE [1] $(x^{(n+1)},y^{(n+1)}) \sim  P_{\theta^{(n)}}((x^{(n)},y^{(n)}),(dx,dy))$ 
\STATE [2] Update $\theta^{(n)}$ to $\theta^{(n+1)}$ by using the result of step 1. such as $(x^{(n+1)},y^{(n+1)})$.
\ENDFOR
\end{algorithmic}
\end{algorithm}

In the adaptive PT algorithms, for example,
the time-varying parameter vector is $\theta = (\gamma_1,\ldots,\gamma_L ; t_2, \ldots, t_L)$. 

\section{Convergence Theorem}\label{sec:ConThm}

Atchade (2011) and Fort et al.~(2011) proved convergence theorems of adaptive MCMC algorithms
that learn the parameters of the target distribution. 
The conditions for convergence in their theorems are, however, technical and strict.
For example, the stationary distribution must converge.
These conditions will considerably restrict the available learning algorithms.

In this section, we show some convergence theorems
that our algorithm in the previous section converges under weaker conditions.
Here, convergence means that an algorithm is ergodic,
that is,
\begin{align}
 \lim_{n \rightarrow \infty} \| A^{(n)}(( x, y, \theta),dx) - \pi(dx) & \| = 0 , \nonumber \\
  & \forall (x, y) \in {\cal X \times \cal Y}, \theta \in \Theta, \nonumber 
\end{align}
where $\|\mu(dx) - \nu(dx)\| = {\rm sup}_{A \in \cal F_{\cal X} }| \mu(A) - \nu(A) |$
and
\begin{align}
 & A^{(n)}((x, y, \theta),B_{\cal X}) \nonumber \\
 &= P\left[
    x^{(n)} \in B_{\cal X} |x^{(0)} = x,y^{(0)} = y, \theta^{(0)} = \theta \right], \nonumber \\[-5pt] 
& \qquad \qquad \qquad \qquad \qquad \qquad \qquad \qquad \qquad  
B_{\cal X} \in F_{\cal X}. \notag
\end{align}

\begin{theorem}
The adaptive MCMC for AVM is ergodic if the following conditions hold:
\begin{description}
\item[(a) Simultaneous uniform ergodicity]
\begin{align}
& \forall \varepsilon > 0, \ \exists N \in \mathbb{N} \ s.t. \nonumber \\ 
& \quad ||P_{ \theta}^N(( x, y),dx) - \pi(dx)|| \leq \varepsilon,  \nonumber \\
& \qquad \qquad \qquad \qquad \quad \forall (x, y) \in {\cal X \times \cal Y}, \theta \in \Theta.
\end{align}

\item[(b) Diminishing adaptation] 
  \begin{align}
  & {\lim}_{n \rightarrow \infty}  
    \sup_{(x,y) \in {\cal X} \times {\cal Y}} \notag \\
  &  \| P_{\theta^{(n+1)}} \left( (x, y),(dx,dy) \right) - P_{\theta^{(n)}}\left((x, y),(dx,dy)\right) \| \notag \\
    & =0 \ in \ probability.
  \end{align}
\end{description}
\end{theorem}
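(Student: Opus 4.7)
The plan is to adapt the coupling argument of Roberts and Rosenthal (2007) to the joint Markov chain on $\mathcal{X}\times\mathcal{Y}$, exploiting the fact that condition (a) is already phrased as convergence of the $x$-marginal, so the joint AVM structure enters only through the adaptation. For any $\varepsilon>0$, the strategy is to introduce a ``frozen'' companion chain that, starting from a reference time $n$, runs $N$ steps of the fixed kernel $P_{\theta^{(n)}}$; then use (a) to control its distance to $\pi(dx)$ and use (b) to control how far the true adaptive chain drifts from the frozen chain over the same $N$ steps.

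Concretely, first I would pick $N$ from (a) so that $\|P_\theta^N((x,y),dx)-\pi(dx)\|\le\varepsilon/2$ for all $(x,y,\theta)$. Next, for each $n$, let $(\tilde x^{(n+k)},\tilde y^{(n+k)})$ denote the frozen chain started at $(x^{(n)},y^{(n)})$ driven by $P_{\theta^{(n)}}$; its $x$-marginal at step $n+N$ is then within $\varepsilon/2$ of $\pi(dx)$. A telescoping argument together with the triangle inequality for total variation yields that the $N$-step joint kernels of the adaptive and the frozen chains, both started at $(x^{(n)},y^{(n)})$, differ in total variation by at most $\sum_{k=0}^{N-1}D_{n+k}$, where $D_n=\sup_{(x,y)}\|P_{\theta^{(n+1)}}((x,y),\cdot)-P_{\theta^{(n)}}((x,y),\cdot)\|$. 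Condition (b) forces each $D_{n+k}\to 0$ in probability, so for $n$ large this finite sum is at most $\varepsilon/2$ with high probability. Marginalizing to $\mathcal{X}$ and combining with the previous bound yields $\|A^{(n+N)}((x,y,\theta),dx)-\pi(dx)\|\le\varepsilon$, which proves ergodicity.

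The main obstacle is the careful handling of the adaptive dependence, since $\theta^{(n)}$ is a random variable measurable with respect to the full history of the chain, and therefore the frozen companion chain must be constructed conditionally on that history on a common probability space. Following Roberts and Rosenthal, the in-probability statement of (b) must be converted to a uniform total-variation bound by conditioning on a ``bad'' event of small probability and bounding its contribution by $1$. A subtlety specific to the AVM setting is that convergence is required only in the $x$-marginal, which is precisely what makes (a) a mild assumption even though the underlying kernel acts on $\mathcal{X}\times\mathcal{Y}$; the elementary inequality $\|\mu(\cdot\times\mathcal{Y})-\nu(\cdot\times\mathcal{Y})\|\le\|\mu-\nu\|$ on the joint space is the bridge between the joint coupling and the desired marginal conclusion, and invariance of $\pi(dx)$ under marginalization of $\pi_{\lambda}(dx,dy)$ for every $\lambda$ is what makes the $x$-component the ``stable'' coordinate to track.
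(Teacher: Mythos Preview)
Your proposal is correct and follows essentially the same route as the paper's proof: freeze the parameter at time $K-N$, run a companion chain for $N$ steps with the fixed kernel $P_{\theta^{(K-N)}}$, control the discrepancy between the adaptive and frozen chains via diminishing adaptation (the paper invokes the Roberts--Rosenthal coupling to get $P(x^{(K)}\neq x'^{(K)})\le 2\epsilon$, which is the probabilistic counterpart of your telescoping bound $\sum_{k=0}^{N-1}D_{n+k}$), and then apply condition (a) to the frozen chain to bound its $x$-marginal against $\pi$. Your write-up is a bit more explicit about the bad-event conditioning and the marginalization inequality, while the paper absorbs those details into the citation of Roberts and Rosenthal, but the argument is the same.
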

\begin{proof}
See Appendix \ref{sec:proofTh1}.
\end{proof} 
The above conditions do not require
that the auxiliary parameter $\lambda^{(n)}$ and the stationary distribution $\pi_{\lambda^{(n)}}$ converge.
The condition (a) can be replaced with more concrete condition that checks only properties of the Markov transition kernel as follows.
\begin{description} 
\item[(a')] (Simultaneously strongly aperiodically geometrical ergodicity) 
There exists 
$C \in {\cal F_{{\cal X }\times {\cal Y}}}, \ V:{\cal X} \times {\cal Y} \rightarrow [1,\infty)$
, $\delta > 0$, $ \tau < 1$, and $b < \infty$, such that ${\rm sup}_C V < \infty$ 
and the following conditions hold for all $\theta \in \Theta$.

\begin{description}
\item[(i) (Strongly aperiodic minorisation condition)]
There exist a probability measure $\nu_{\theta}\left (dx,dy \right)$ on $C$ such that  
$P_{ \theta} (( x, y),(dx',dy')) \geq \delta \nu_{\theta}( dx',dy' )$ for all 
$x, y \in C$.
\item[(ii) (Geometric drift condition)]
\begin{align}
\left(P_{ \theta}V \right)( x,y) \leq \tau V(x,y) + & b \bm 1_{\{C\}}( x, y), \nonumber \\
& for \ all \ x,y \in {\cal X \times \cal Y}, \nonumber 
\end{align}
where $\ (P_{ \theta}V )( x, y) $ \\ $\equiv
 \int\!\!\!\int P_{ \theta} \left (( x, y),(dx',dy') \right) V(x',y') dx' dy'$, 
and \\ $\bm 1_{\{ \cdot \}}(x)$ is the indicator function.
\end{description}
\end{description}
\begin{theorem}
The adaptive MCMC for AVM is ergodic if the condition (b) in Theorem 1, the condition (a')
 and $E[V(x^{(0)},y^{(0)})] < \infty$ hold. 
\end{theorem}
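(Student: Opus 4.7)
The plan is to reduce Theorem 2 to Theorem 1 by using condition (a') together with the moment condition to produce, uniformly in $\theta$, a geometric convergence rate whose constants depend on the starting point $(x,y)$ only through $V(x,y)$. Combined with a tightness bound on $V(x^{(n)},y^{(n)})$, this yields a containment-type property that can be substituted for the simultaneous uniform ergodicity used in Theorem 1.

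The first ingredient is a drift iteration. Taking expectations in (a')(ii) gives $E[V(x^{(n+1)},y^{(n+1)})] \leq \tau E[V(x^{(n)},y^{(n)})] + b$ irrespective of $\theta^{(n)}$, so induction and $E[V(x^{(0)},y^{(0)})] < \infty$ yield $\sup_n E[V(x^{(n)},y^{(n)})] < \infty$. By Markov's inequality, for every $\varepsilon>0$ there is $K_\varepsilon<\infty$ such that $P(V(x^{(n)},y^{(n)}) > K_\varepsilon) < \varepsilon$ for all $n$. The second ingredient is the classical Meyn--Tweedie drift-plus-minorisation result applied to each fixed $\theta$: (a')(i) and (a')(ii) together produce constants $\rho \in (0,1)$ and $R<\infty$ depending only on $\tau$, $b$, $\delta$ and $\sup_C V$ --- hence not on $\theta$ --- with
\begin{align*}
\|P_\theta^n((x,y),\cdot) - \pi_{\lambda}(\cdot)\| \leq R\, V(x,y)\, \rho^n
\end{align*}
for all $n$ and all $(x,y)$. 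Restricting the starting point to any sublevel set $\{V \leq K_\varepsilon\}$ upgrades this to a simultaneous uniform ergodicity statement on that set.

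To conclude, I would condition on the high-probability event $\{V(x^{(n)},y^{(n)}) \leq K_\varepsilon\}$ guaranteed by the first ingredient, choose $N$ so that $R K_\varepsilon \rho^N \leq \varepsilon$, and then rerun the coupling-and-decomposition argument of Theorem 1 verbatim, using the second ingredient as a drop-in replacement for (a) on the sublevel set and invoking (b) exactly as in that proof. The total variation error splits into a $V$-tail contribution bounded by $\varepsilon$, an initial mixing contribution bounded by $\varepsilon$ through the Meyn--Tweedie rate, and a diminishing-adaptation contribution bounded by $\varepsilon$ through (b). The main obstacle is the uniformity claim in the second ingredient: one must track carefully through the renewal/split-chain argument how the constants $\rho$ and $R$ depend on $(\tau,b,\delta,\sup_C V)$ and confirm that no hidden $\theta$-dependence is introduced via the invariant distribution $\pi_\lambda$ or the small-set measure $\nu_\theta$. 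Condition (a') is phrased precisely so that all these structural parameters are common across $\theta$, so this step reduces to recording the explicit constants in the standard proof; once available, the remaining $\varepsilon/3$ splitting is routine.
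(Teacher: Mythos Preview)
Your proposal is correct and is essentially an unpacking of what the paper invokes by citation: the paper's own proof is the single line ``Straightforward from Proposition~3 and the proof of Theorem~3 in Roberts and Rosenthal (2007), and Theorem~1,'' and what you have sketched --- iterating the drift to get $\sup_n E[V(x^{(n)},y^{(n)})]<\infty$, extracting uniform Meyn--Tweedie constants $(R,\rho)$ from the common structural parameters $(\tau,b,\delta,\sup_C V)$, then conditioning on a $V$-sublevel set and rerunning the coupling argument --- is precisely the content of those Roberts--Rosenthal results specialised to the AVM setting. Your identification of the one nontrivial point (that the explicit renewal-theoretic constants depend on $\theta$ only through the common parameters, not through $\nu_\theta$ or $\pi_\lambda$) matches exactly what Proposition~3 there supplies.
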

\begin{proof} 
Straightforward from Proposition 3 and the proof of Theorem 3 in Roberts and Rosenthal (2007), and Theorem 1.
\end{proof} 

\begin{theorem}[Weak law of large numbers]
Suppose an adaptive MCMC for AVM satisfies the conditions (a) and (b)
and let $g:\cal X \rightarrow \mathbb{R}$ be a bounded measurable function.
Then,
\begin{align}
\frac 1 n \sum_{i=1}^n g(x^{(i)}) \rightarrow \int g(x)\pi(dx) \quad \mbox{ in probability} \notag
\end{align}  
as $n \rightarrow \infty$
for any initial values $(x, y) \in \cal X \times \cal Y$ and $\theta \in \Theta$.
\end{theorem}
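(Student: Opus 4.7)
The plan is to decompose $S_n - \bar g = (S_n - \mathbb{E} S_n) + (\mathbb{E} S_n - \bar g)$, where $S_n := \tfrac{1}{n}\sum_{i=1}^n g(x^{(i)})$ and $\bar g := \int g(x)\pi(dx)$, and show each summand tends to zero (the second deterministically, the first in probability via Chebyshev). The second piece is handled directly by Theorem 1: $|\mathbb{E}[g(x^{(i)})]-\bar g|\le \|g\|_\infty\,\|A^{(i)}((x,y,\theta),\cdot)-\pi\|\to 0$, and a Cesàro average of a null sequence is null.

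For the first piece I would bound $\mathrm{Var}(S_n)=n^{-2}\sum_{i,j}\mathrm{Cov}(g(x^{(i)}),g(x^{(j)}))$. Since the joint process $(x^{(n)},y^{(n)},\theta^{(n)})$ is Markov (by the structure of Algorithm 2), the tower property gives, for $i<j$,
\begin{align}
\mathrm{Cov}(g(x^{(i)}),g(x^{(j)})) = \mathbb{E}\Bigl[\bigl(g(x^{(i)})-\mathbb{E} g(x^{(i)})\bigr)\bigl(A^{(j-i)}((x^{(i)},y^{(i)},\theta^{(i)}),g)-\mathbb{E} g(x^{(j)})\bigr)\Bigr]. \notag
\end{align}
Suppose one has a uniform rate $\varepsilon(k)\downarrow 0$ with $\sup_{(x,y,\theta)}\|A^{(k)}((x,y,\theta),\cdot)-\pi\|\le\varepsilon(k)$. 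Then $|\mathrm{Cov}(g(x^{(i)}),g(x^{(j)}))|\le \|g\|_\infty^2(\varepsilon(j-i)+\varepsilon(j))$. Splitting the double sum into a near-diagonal band $|i-j|\le K$ (of cardinality $O(nK)$ and each covariance bounded crudely by $\|g\|_\infty^2$) and a tail $|i-j|>K$ (each term bounded by $\|g\|_\infty^2\sup_{k>K}\varepsilon(k)$) gives $\mathrm{Var}(S_n)=O(K/n)+O(\sup_{k>K}\varepsilon(k))$, which can be driven to zero by first taking $K$ large and then $n$ large. Chebyshev then finishes the argument.

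The main obstacle is upgrading Theorem 1, stated pointwise in the initial condition, to the uniform bound $\sup_{(x,y,\theta)}\|A^{(k)}((x,y,\theta),\cdot)-\pi\|\to 0$ needed above. My plan is to revisit the coupling proof of Theorem 1 (the one deferred to Appendix A): the only two ingredients used are condition (a), which already delivers $\|P_\theta^N((x,y),\cdot)-\pi\|\le\varepsilon$ uniformly in $(x,y,\theta)$, and condition (b), which controls the discrepancy between $P_{\theta^{(n+1)}}$ and $P_{\theta^{(n)}}$ uniformly in $(x,y)$. Both are uniform in the initial condition, so the coupling time extracted from that proof has a tail bound independent of $(x,y,\theta)$, which yields the desired uniform $\varepsilon(k)$. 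With this uniform version of Theorem 1 in hand, the variance estimate above closes the proof.
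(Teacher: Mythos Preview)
The paper's own proof is a one-line citation to the coupling argument of Roberts and Rosenthal (2007). Your second-moment route (bound the variance, then apply Chebyshev) is a genuinely different strategy, but it runs into two difficulties that the coupling argument is designed to sidestep.

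First, the tower-property identity you write presumes that $(x^{(n)},y^{(n)},\theta^{(n)})$ is time-\emph{homogeneous} Markov, so that the conditional law of $x^{(j)}$ given the history at time $i$ is $A^{(j-i)}$ evaluated at $(x^{(i)},y^{(i)},\theta^{(i)})$. In the general framework, however, step~[2] typically depends explicitly on the absolute time $n$ --- for instance through the step sizes $a^l_n$ and $b_n$ in (\ref{eq:itemplrn})--(\ref{eq:learnProp}) --- so the joint process is only time-\emph{inhomogeneous} Markov. The conditional kernel from time $i$ to time $j$ then depends on both $i$ and $j$, not on $j-i$ alone, and is not the object $A^{(j-i)}$ defined in Section~\ref{sec:ConThm}.

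Second, and more critically, your plan to upgrade Theorem~1 to a uniform rate $\sup_{(x,y,\theta)}\|A^{(k)}((x,y,\theta),\cdot)-\pi\|\le\varepsilon(k)\downarrow 0$ does not follow from (a) and (b) as stated. Condition (a) is indeed uniform in $(x,y,\theta)$, but condition (b) asserts only that $D_n=\sup_{(x,y)}\|P_{\theta^{(n+1)}}-P_{\theta^{(n)}}\|\to 0$ \emph{in probability}. The law of the random sequence $\{\theta^{(n)}\}$ --- and hence the rate at which $D_n$ vanishes and the threshold $n^*$ produced in Appendix~\ref{sec:proofTh1} --- depends on the initial point $(x^{(0)},y^{(0)},\theta^{(0)})$; nothing in (b) makes this dependence uniform. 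The Roberts--Rosenthal coupling proof of the WLLN avoids this entirely: it never requires a rate uniform over initial states, because it couples successive length-$N$ blocks of the \emph{single} realised chain to frozen-$\theta$ chains and invokes condition (a) on each block directly, rather than passing through a global covariance-decay estimate.
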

\begin{proof}
Straightforward from the coupling argument (Roberts and Rosenthal 2007). 
\end{proof}

The convergence of the adaptive PT algorithm is proved by applying Theorem 2 as below.
\begin{theorem}
  The adaptive PT algorithm is ergodic if the following conditions hold:
  \begin{description}
  \item[(s1)] The support $S$ of the target distribution $\pi(d x)$ is compact and the density $\pi(x)$ is continuous and positive on $S$.
  \item[(s2)] The family of proposal densities $\{ q_{\gamma} \}_{\gamma \in \Gamma^p}$ is continuous and positive on $S^2  \times \Gamma^p$, where $\Gamma = [c,C]$.
\end{description}
\end{theorem}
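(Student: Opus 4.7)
The plan is to deduce Theorem 4 from Theorem 2 by verifying condition (a') (simultaneously strongly aperiodically geometrically ergodic) and condition (b) (diminishing adaptation) for the PT transition kernel $P_{\theta}$ acting on the product space $\mathcal{S}^{L}$, with $\theta = (\gamma_{1},\ldots,\gamma_{L},\zeta_{2},\ldots,\zeta_{L})$. Because the learned $\mu_{l}^{(n)}$ do not enter the kernel (only into its own update), $\mu$ can be ignored when comparing kernels; only $\gamma$ and $\zeta$ matter.

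First I would verify (a'). The key point is that compactness of $S$ in (s1) and continuity/positivity of the proposal density in (s2) immediately give a uniform lower bound $q_{\gamma}(x,x') \ge \eta > 0$ on $S^{2} \times \Gamma^{p}$ by a standard extreme-value argument. Similarly, continuity and positivity of $\pi$ on compact $S$ yields $0 < m \le \pi(x) \le M$, so for each $l$ the acceptance probability of the Metropolis move for $\pi_{t_{l}}$ is bounded below by $(m/M)^{t_{l}}$, uniformly in $\gamma_{l} \in \Gamma^{p}$, provided $t_{l}$ lies in a compact subset of $(0,1]$ — which must be argued from the update rule (\ref{eq:itemplrn}) (the presence of $\log(e^{-\zeta_{l}}+1)$ in $a^{l}_{n}$ prevents $\zeta_{l}$ from drifting to $-\infty$ faster than logarithmically, and the minimum-temperature pruning step keeps $L$ finite). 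With this, each parallel step has a transition density bounded below by a positive constant on $S$, so the product kernel on $\mathcal{S}^{L}$ has a transition density bounded below by a positive constant $\delta > 0$ uniformly over $\theta$. Taking $V \equiv 1$, $C = \mathcal{S}^{L}$, and $\nu_{\theta}$ equal to normalized Lebesgue measure on $\mathcal{S}^{L}$, both (i) the strongly aperiodic minorisation and (ii) the geometric drift (trivially, since $V$ is constant) hold with constants independent of $\theta$.

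Next I would verify (b). Since $b_{n} \to 0$ and $a^{l}_{n} \to 0$, the single-step increments of $\gamma_{l}^{(n)}$ and $\zeta_{l}^{(n)}$ vanish almost surely (the $(x-\mu)^{2}$ and $\mathrm{ER}-\alpha$ factors are bounded by compactness of $S$ and by being a centred Bernoulli, respectively). The pruning step changes $L$ only finitely often, since $L$ is nonincreasing, integer-valued and bounded below by $1$; hence after some random but almost surely finite time, $L$ is constant, and thereafter $\theta^{(n+1)} - \theta^{(n)} \to 0$ in probability. The remaining step is to translate parameter convergence into uniform kernel convergence: using the explicit form of $P_{\theta}$ as a finite mixture of Metropolis moves plus an exchange move, each piece depends continuously on $(\gamma, \zeta)$, and by compactness of the state and (effective) parameter space this continuity is uniform in $(x,y)$, giving
\begin{equation*}
\sup_{(x,y)} \| P_{\theta^{(n+1)}}((x,y),\cdot) - P_{\theta^{(n)}}((x,y),\cdot) \| \to 0
\end{equation*}
in probability, which is (b).

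The main obstacle I anticipate is the verification that $\theta^{(n)}$ stays in a compact set, specifically that $\gamma_{lj}^{(n)} \in [c,C]$ as demanded by (s2) and that $\zeta_{l}^{(n)}$ does not escape to $\pm \infty$. The recursion for $\gamma$ is a Robbins--Monro average of bounded quantities, so standard stability arguments confine it, but care is needed because of the jump $\mu_{l}^{(n+1)} \leftarrow x_{l}^{(n+1)}$ at exchanges. For $\zeta_{l}$ the decaying-plus-damped form of $a^{l}_{n}$ is exactly what prevents runaway, but making this rigorous requires a separate argument; once that tightness is in hand, the rest of the proof reduces to the routine continuity and compactness computations sketched above, and Theorem 2 then delivers ergodicity of the adaptive PT algorithm.
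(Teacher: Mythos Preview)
Your approach is essentially the paper's: verify (a') by a uniform minorisation on the compact product space $S^L$ with a trivial drift function, and verify (b) via vanishing parameter increments together with the observation that $L$ changes only finitely often. Two small points of comparison are worth making. First, the paper takes the $\theta$-dependent stationary measure $\pi_{\bm t}(\cdot)=\prod_l \pi_{t_l}(\cdot)$ as the minorising measure $\nu_\theta$, obtaining directly
\[
P_{\gamma_l,t_l}(x_l,B_l)\ \ge\ \frac{\delta}{d}\,\pi_{t_l}(B_l),\qquad \delta=\inf_{S^2\times\Gamma^p} q_\gamma,\quad d=\sup_{S\times\mathcal T}\pi_t,
\]
so the uniform constant $\alpha_r(\delta/d)^L$ falls out without ever isolating an acceptance-probability lower bound; your Lebesgue-measure minorisation works equally well but is a slightly longer route. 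Second, your caveat that the bound $(m/M)^{t_l}$ requires $t_l$ to lie in a compact subset of $(0,1]$ is unnecessary: since $m/M<1$ and $t_l\le 1$, one already has $(m/M)^{t_l}\ge m/M$, so your minorisation is uniform over all $t_l\in(0,1]$ and no control of $\zeta_l$ is needed for (a'). Your remaining concern, that $\gamma_{lj}^{(n)}$ actually stays in $[c,C]$, is legitimate; the paper does not derive this from the recursion either but simply treats $\gamma\in\Gamma^p$ as part of hypothesis (s2).
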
 
\begin{proof}
See Appendix \ref{sec:proofTh4}.
\end{proof}
It will be possible to remove the assumption that $S$ is compact by extending Theorem 6 of Bai et al.~(2011).
\section{Conclusions}

This paper proposed adaptive MCMC algorithms
to learn parameters of proposal distributions and auxiliary distributions simultaneously, 
and proved convergence theorems that give weak sufficient conditions for convergence.

We applied this framework to the Parallel Tempering algorithm
and showed that the adaptive PT algorithm can adapts its parameters on the fly so that samples mix rapidly 
 by experiments with a mixture model. 
We also presented that the performance of the PT algorithm depends on its parameters
and the adaptive PT algorithm finds good parameters through experiments for Bayesian estimation.

Although we discussed the PT algorithm in a real space so far, 
we consider the idea of adaptation is applicable to those in a discrete space.
We also consider our adaptive framework is applicable to other auxiliary variable methods such as the Gibbs variable selection, the partial decoupling method (one of the cluster Monte Carlo methods) and so on.
We will extend our theory to the new field in the future.

\section*{Appendix}

\appendix  

\section{Proof of Theorem 1}\label{sec:proofTh1}

Let $\epsilon > 0$, and choose $N \in \mathbb{N}$ as in condition (a).
From condition (b) and the coupling argument in the proof of Theorem 1 of Roberts and Rosenthal (2007), the following result 
holds.  

There exists $n^* \in \mathbb{N}$ large enough so that for $K > n^* + N$, 
there exists a second chain $\{ x'^{(n)},y'^{(n)} \}_{n=K-N}^K$, such that $(x'^{(K-N)},y'^{(K-N)}) = (x^{(K-N)},y^{(K-N)})$ and \\
$(x'^{(n+1)},y'^{(n+1)}) \sim P_{\theta^{(K-N)}}((x'^{(n)},y'^{(n)}),dx,dy)$ for $n=K-N,...,K-1$, 
and $P(x^{(K)} \neq x'^{(K)} ) \leq 2 \epsilon$. 

Then it follows that 
\begin{equation}
|| P(x^{(K)} \in dx ) - P(x'^{(K)} \in dx )|| \leq 2 \epsilon, \label{eq:tdis_xx'} 
\end{equation}
 where $P(x^{(K)} \in dx)$ denotes the distribution of $x^{(K)}$.
 (Because of $||P(y \in dx) - P(z \in dx)|| \leq P(y \neq z)$.)

On the other hand, from the condition (a), for all $A_{\cal X} \in F_{\cal X}$, we have
\begin{align}
\epsilon \geq & \left | E [P^N_{\theta^{(K-N)}}( (x^{(K-N)},y^{(K-N)}),A_{\cal X}) - \pi(A_{\cal X})] \right | \notag \\
= & \left | P(x'^{(K)} \in A_{\cal X}) - \pi(A_{\cal X}) \right |. \notag
\end{align}
That is,
\begin{align}
|| P(x'^{(K)} \in dx ) - \pi(dx) || \leq \epsilon . \label{eq:tdis_x'pi}
\end{align}

From inequality ($\ref{eq:tdis_xx'}$) and ($\ref{eq:tdis_x'pi}$), we have 
\begin{align}
|| P(x^{(K)} \in dx ) - \pi(dx) || \leq 3 \epsilon. \label{eq:tdis_xpi}
\end{align}
Since $K \geq n^* + N$ is arbitrary, the algorithm is ergodic.

\section{Proof of Theorem 4}\label{sec:proofTh4}

We prove the sufficient conditions of convergence in Theorem 2 are satisfied. Firstly, we prove the condition (a') holds.

Let Borel $\sigma$-algebra on $\mathbb R^p$ be ${\cal B}(\mathbb R^p)$.  
For $\bm x \in S^L $, $\bm \gamma \in \Gamma^{pL}$, $\bm t \in {\cal T}^L$
and $\bm B = B_1 \times B_2 \times \cdots \times B_L, \ B_l \in {\cal B}(S)$,
the transition kernel of the PT algorithm is  
\begin{align}
K_{\bm \gamma,\bm t}(\bm x,\bm B ) = \alpha_r \prod_{l=1}^{L} P_{\gamma_l, t_l}(x_l,B_l) + 
(1-\alpha_r) \sum_{l=2}^L \varsigma_l k_{l,l-1}(\bm x,\bm B), \label{eq:KerPT} 
\end{align}
where $0 \leq \varsigma_l \leq 1$, $\sum_{l=2}^L \varsigma_l = 1$, $P_{\gamma_l,t_l}(x_l,dx_l)$ and 
$k_{l,l-1}(\bm x,d \bm x')$ are the Metropolis transition kernel for $\pi_{t_l}(dx_l)$ and the transition kernel of 
an exchange process of $x_l$ and $x_{l-1}$, respectively.

By condition (s1), we have $d \equiv \sup_{x \in S,t \in {\cal T}} \pi_t(x) < \infty$.
By the compactness of $S$ and condition (s2), we have also $\delta \equiv$ $\inf_{x,x' \in S, \gamma \in \Gamma^p}$ $q_{\gamma}(x,x') > 0$.

For $x \in S$ and $t \in {\cal T}$, denote $R_{x,t} = \left \{ y \in S | \frac{\pi_{t}(y)}{\pi_{t}(x)} \leq 1 \right \}$.
For $x_l \in S$, $B_l \in {\cal B}(S)$,  
$t_l \in {\cal T}$ and $\gamma_l \in \Gamma$, we have 
\begin{align}
& P_{\gamma_l, t_l}(x_l,B_l) \notag \\
& = \int_{B_l} q_{\gamma_l}(x_l,x_l') \min \left (1,\frac{\pi_{t_l}(x'_l)}{\pi_{t_l}(x_l)} \right )  d x_l' \notag \\ 
& + \bm 1_{\{B_l\}}(x_l) \int_S q_{\gamma_l}(x_l,\tilde x_l) \left \{ 1 - \min \left ( 1,\frac{\pi_{t_l}(\tilde x_l)}{\pi_{t_l}(x_l)} \right ) \right \} d \tilde x_l \notag \\
& = \int_{B_l \cap R_{x_l,t_l}}  q_{\gamma_l}(x_l,x_l') \frac{\pi_{t_l}(x'_l)}{\pi_{t_l}(x_l)} d x_l' \notag \\
& + \int_{B_l \cap R_{x_l,t_l}^c} q_{\gamma_l}(x_l,x_l') d x_l' \notag \\
& \geq \frac{\delta}{d} \int_{B_l \cap R_{x_l,t_l}} \pi_{t_l}(x'_l) d x_l' + 
\frac{\delta}{d} \int_{B_l \cap R_{x_l,t_l}^c} \pi_{t_l}(x'_l)  d x_l' \notag \\
& = \frac{\delta}{d} \pi_{t_l}(B_l). \notag
\end{align}
From Eq.~(\ref{eq:KerPT}), this inequality leads to 
\begin{align}
K_{\bm \gamma,\bm t}(\bm x,\bm B ) & \geq \alpha_r \prod_{l=1}^{L} P_{\gamma_l, t_l}(x_l,B_l) \notag \\
& \geq \alpha_r \prod_{l=1}^{L} \frac {\delta}{d} \pi_{t_l}(B_l) \notag \\
& =\alpha_r \frac { \delta^L}{d^L} \pi_{\bm t}(\bm B), \label{eq:KerIneq}
\end{align}
where 
$\pi_{\bm t}(\bm B) = \prod_{l=1}^{L} \pi_{t_l}(B_l)$ is a probability measure on $S^L$.
Since the inequality (\ref{eq:KerIneq}) holds for all $\bm B \in {\cal B}(S^L)$, the condition (a')(i) follows.

Let $0 < \tau < 1$, $V(\bm x) =1$ if $\bm x \in S^L$, otherwise $V(\bm x) \equiv 1/\tau$, and $b=1-\tau$.
Then we have
\begin{align}
( K_{\bm \gamma,\bm t} V )(\bm x) \leq \tau V(\bm x) + b \bm 1_{\{S^L\}}(\bm x), \quad \forall \bm x \in \mathbb{R}^{pL}.
\end{align}
This inequality implies that the condition (a')(ii) is satisfied.
Also we have $E[V(x^{(0)},y^{(0)})] \leq 1/\tau < \infty$.

From Eq.~(\ref{eq:itemplrn}) and (\ref{eq:learnProp}),
 it follows that $t_l^{(n+1)} - t_l^{(n)} \rightarrow 0 \ a.s.$ and $\gamma_l^{(n+1)} - \gamma_l^{(n)} \rightarrow 0$ as $n \to \infty$. 
The minimum inverse temperature decision process changes the value of $\varsigma_l$ only finite times. 
Thus, the condition (b) in Theorem 1 holds. 

The proof is complete.

\nocite{Andrieu_etal06,Atchade11,Bai_etal11,SwendsenWang87,Haario01,Hastings70,Higdon98,Hukushima96,geyer91arti,Goswami07,Fort11AS,Gilks98,Jasra07SC,Liu01,Nagata08,RobertCPtext04,Roberts07,Dellaportas02} 

\bibliographystyle{spmpsci}      

\bibliography{English.bib}   

\begin{thebibliography}{10}
\providecommand{\url}[1]{{#1}}
\providecommand{\urlprefix}{URL }
\expandafter\ifx\csname urlstyle\endcsname\relax
  \providecommand{\doi}[1]{DOI~\discretionary{}{}{}#1}\else
  \providecommand{\doi}{DOI~\discretionary{}{}{}\begingroup
  \urlstyle{rm}\Url}\fi

\bibitem{Andrieu_etal06}
Andrieu, C., Moulines, E.: On the ergodicity properties of some adaptive mcmc
  algorithms.
\newblock Annals of Applied Probability \textbf{16}(3), 1462--1505 (2006)

\bibitem{Atchade11}
Atchade, Y.: A computational framework for empirical bayes inference.
\newblock Stat. Comput. \textbf{21}(4), 463--473 (2011)

\bibitem{Bai_etal11}
Bai, Y., Roberts, G., Rosenthal, J.: On the containment condition for adaptive
  markov chain monte carlo algorithms.
\newblock Advances and Applications in Statistics \textbf{21}(1), 1--54 (2011)

\bibitem{Dellaportas02}
Dellaportas, P., Forster, J.J., Ntzoufras, I.: On bayesian model and variable
  selection using mcmc.
\newblock Stat. Comput. \textbf{12}(1), 27--36 (2002)

\bibitem{Fort11AS}
Fort, G., Moulines, E., Priouret, P.: Convergence of adaptive and interacting
  markov chain monte carlo algorithms.
\newblock Ann. Stat. \textbf{39}(6), 3262--3289 (2011)

\bibitem{geyer91arti}
Geyer, C.: Markov chain monte carlo maximum likelihood.
\newblock Proc. 23rd Symp. Interface Comput. Sci. Statist pp. 156--216 (1991)

\bibitem{Gilks98}
Gilks, W., Roberts, G., Sahu, S.: Adaptive markov chain monte carlo through
  regeneration.
\newblock Journal of the American Statistical Association \textbf{93}(443),
  1045--1054 (1998)

\bibitem{Goswami07}
Goswami, G., Liu, J.: On learning strategies for evolutionary monte carlo.
\newblock Stat. Comput. \textbf{17}(1), 23--38 (2007)

\bibitem{Haario01}
Haario, H., Saksman, E., Tamminen, J.: An adaptive metropolis algorithm.
\newblock Bernoulli \textbf{7}(2), 223--242 (2001)

\bibitem{Hastings70}
Hastings, W.: Monte carlo sampling methods using markov chains and their
  applications.
\newblock Biometrika \textbf{57}(1), 97--109 (1970)

\bibitem{Higdon98}
Higdon, D.M.: Auxiliary variable methods for markov chain monte carlo with
  applications.
\newblock J. Am. Stat. Assoc. \textbf{93}(442), 585--595 (1998)

\bibitem{Hukushima96}
Hukushima, K., Nemoto, K.: Exchange monte carlo method and application to spin
  glass simulations.
\newblock Journal of the Physical Society of Japan \textbf{65}(6), 1604--1608
  (1996)

\bibitem{Jasra07SC}
Jasra, A., Stephens, D., Holmes, C.: On population-based simulation for statics
  inference.
\newblock Stat. Comput. \textbf{17}(3), 263--279 (2007)

\bibitem{Liu01}
Liu, J.: Monte Carlo Strategies in Scientific Computing.
\newblock Springer, New York (2001)

\bibitem{Nagata08}
Nagata, K., Watanabe, S.: Asymptotic behavior of exchange ratio in exchange
  monte carlo method.
\newblock Neural Networks \textbf{21}(7), 980--988 (2008)

\bibitem{RobertCPtext04}
Robert, C., Casella, G.: Monte Carlo Statistical Methods.
\newblock Springer (2004)

\bibitem{Roberts07}
Roberts, G., Rosenthal, J.: Coupling and ergodicity of adaptive markov chain
  monte carlo algorithms.
\newblock J. Appl. Probab. \textbf{44}(2), 458--475 (2007)

\bibitem{SwendsenWang87}
Swendsen, R., Wang, J.: Nonuniversal critical dynamics in monte carlo
  simulations.
\newblock Physical Review Letters \textbf{58}, 86--88 (1987)

\end{thebibliography}


\begin{thebibliography}{5}
\bibitem{atchade}
Atchade, Y.: A computational framework for empirical bayes inference.
\newblock Stat. Comput. \textbf{21}(4), 463--473 (2011)
\bibitem{Bai_etal11}
Bai, Y., Roberts, G., Rosenthal, J.: On the containment condition for adaptive
  markov chain monte carlo algorithms.
\newblock Advances and Applications in Statistics \textbf{21}(1), 1--54 (2011)
\bibitem{1}
Roberts, G. O., and Rosenthal, J. S. (2007) Coupling and ergodicity of adaptive Markov chain Monte Carlo algorithms.
 Journal of Applied Probability, 44, 458 -- 475.
\bibitem{Hukushima}
Hukushima, K., Nemoto, K.(1996) Parallel Tempering and application to spin glass simulations. J. Phys. Soc. Jpn. 65, 1604--1608.
\bibitem{geyer91}
Geyer, C. (1991) Markov chain monte carlo maximum likelihood. In Computing Science and Statistics, 46:156--216.
\end{thebibliography}

\if 0

\fi

\end{document}